\documentclass[journal]{IEEEtran}

\usepackage{amsmath,amssymb,amsthm,mathtools,cite}
\usepackage[hidelinks]{hyperref}
\hypersetup{
  pdftitle={Extended Proofs for Encrypted Sensing in Bistatic Radar: Unified Analysis and Randomly Activated Arrays},
  pdfauthor={Tianrui He, Ziheng Zheng, Guozheng Sun, Tianyao Huang, and Yimin Liu}
}

\newtheorem{proposition}{Proposition}
\newcommand{\E}{\mathbb{E}}
\newcommand{\Var}{\operatorname{Var}}
\newcommand{\Cov}{\operatorname{Cov}}

\title{Extended Proofs for Encrypted Sensing in Bistatic Radar:
Unified Analysis and Randomly Activated Arrays}
\author{Tianrui He, Ziheng Zheng, Guozheng Sun, Tianyao Huang, and Yimin Liu%
\thanks{T. He, Z. Zheng, G. Sun, and Y. Liu are with the Department of Electronic
Engineering, Tsinghua University, Beijing 100084, China.}%
\thanks{T. Huang is with the School of Computer and Communication Engineering,
University of Science and Technology Beijing, Beijing 100083, China.}}

\begin{document}
\maketitle

\begin{abstract}
This supplementary material provides the complete mathematical development
for the signal-to-noise ratio loss (SNRL) analysis of pulse-wise spatially
randomized encrypted sensing. We first derive a reduced model in which the
transmit-side randomization is represented by an independent and identically
distributed complex spatial coefficient. Based on this model, we establish
the almost-sure and mean asymptotic SNRL, derive the leading bias and variance
for a finite coherent processing interval (CPI), and obtain closed-form
coefficient moments and asymptotic SNRL for independent Bernoulli element
activation. These results complement the concise statements and proof sketches
in the associated paper.
\end{abstract}

\section{Introduction}

In a bistatic radar, the transmitter and receiver are spatially separated.
Besides the legitimate receiver, an illegitimate receiver in the illuminated
region may observe both the direct-path transmission and the target echo. The
direct-path signal can then serve as a reference for non-cooperative pulse
compression and coherent integration, as in passive and bistatic radar
processing \cite{griffiths2010klein,palmer2013dvb}. Encrypted sensing seeks to
prevent the illegitimate receiver from exploiting the illumination for target
sensing while preserving coherent processing at the legitimate receiver.

Pulse-to-pulse agility has been extensively studied in radar waveform and
array design; one representative example jointly varies carrier and spatial
resources in a multicarrier agile phased array \cite{huang2020multicarrier}.
For encrypted sensing, however, random changes that appear identically in the
direct path and the target echo can be estimated or removed by reference-based
processing. Spatial randomization is different because the two paths leave the
transmit array in different directions. Their array responses therefore do not
cancel in general. The phase-center agile array (PCAA) exploits this property
by randomly selecting a transmit subarray at every pulse
\cite{zheng2024pcaa,zheng2026pcaa}. The resulting path-dependent phase changes
degrade coherent integration when the pulse-wise subarray realization is not
available to the illegitimate receiver.

The associated paper extends this idea from subarray switching to general
pulse-wise spatial randomization. Its key reduction is a random complex
coefficient that contains the residual transmit-array response after
direct-reference processing. Once this coefficient is identified, the
encrypted-sensing performance of different activation mechanisms can be studied
through a common statistical metric. Independent element activation is then a
natural realization of this framework and is related to random antenna subset
selection previously considered for radar jamming \cite{wang2021rass}.

This document supplies the details omitted from the length-limited conference
paper. Section~II derives the reduced coefficient model and the finite-CPI
performance metric. Section~III proves the unified asymptotic result.
Section~IV establishes the leading finite-CPI bias and variance, including the
treatment of anomalously small sample denominators. Section~V derives the
exact coefficient moments for Bernoulli element activation.

\section{Problem Formulation}

\subsection{Random Spatial Coefficient}

Consider an $N_T$-element uniform linear transmit array with inter-element
spacing $d_T$ and wavelength $\lambda$. Its steering vector is
\begin{equation}
  \begin{aligned}
    \mathbf a_T(\theta)
    &=\left[1,e^{-j\kappa d_T\sin\theta},\ldots,
      e^{-j\kappa(N_T-1)d_T\sin\theta}\right]^T,\\
    \kappa&=\frac{2\pi}{\lambda}.
  \end{aligned}
  \label{eq:steering-vector}
\end{equation}
During pulse $n$, the array applies a random complex weight vector
$\mathbf w_n$. Let $\theta_T$ and $\theta_{TR}$ denote the transmit-to-target
and transmit-to-receiver directions, respectively, and define their spatial-frequency separation as
\begin{equation}
  \psi=\kappa d_T(\sin\theta_T-\sin\theta_{TR}).
  \label{eq:spatial-separation}
\end{equation}
The transmit-array factors on the target and direct paths are
\begin{equation}
  u_{T,n}=\mathbf a_T^T(\theta_T)\mathbf w_n,
  \qquad
  u_{D,n}=\mathbf a_T^T(\theta_{TR})\mathbf w_n.
  \label{eq:path-factors}
\end{equation}
After the target echo is correlated with the direct-path reference, all
transmit-side spatial dependence is contained in
\begin{equation}
  g_n=u_{T,n}u_{D,n}^*
  =\left[\mathbf a_T^T(\theta_T)\mathbf w_n\right]
   \left[\mathbf a_T^T(\theta_{TR})\mathbf w_n\right]^*.
  \label{eq:spatial-coefficient}
\end{equation}
Thus, temporal or carrier phase common to the two received paths is removed,
whereas the differential spatial response remains. If the weights are drawn
independently from the same activation law at every pulse, then
$g_0,g_1,\ldots$ are independent and identically distributed (i.i.d.).

\subsection{Finite-CPI Performance Metric}

At the correct target parameters, a legitimate receiver that knows the
activation sequence can include $\{g_n\}$ in its matched template. Its output
signal power is proportional to $\sum_{n=0}^{N-1}|g_n|^2$. In contrast, an
illegitimate receiver without the pulse-wise activation realization uses the
conventional angle--Doppler template, whose output signal power is
proportional to $N^{-1}|\sum_{n=0}^{N-1}g_n|^2$. Under identical normalized
output-noise power, their output-SNR ratio is the SNRL used in
\cite{zheng2026pcaa}:

For every $N$ for which
$\sum_{n=0}^{N-1}|g_n|^2>0$, define
\begin{equation}
  \rho_N=
  \frac{\left|\sum_{n=0}^{N-1}g_n\right|^2}
       {N\sum_{n=0}^{N-1}|g_n|^2}.
  \label{eq:rhoN}
\end{equation}
On the event $\sum_{n=0}^{N-1}|g_n|^2=0$, set $\rho_N=0$.
This convention also covers a CPI in which all elements remain inactive.
By the Cauchy--Schwarz inequality, $0\leq\rho_N\leq1$. A value near one means
that the illegitimate receiver retains almost all coherent-processing gain, so
encrypted sensing is nearly ineffective; a smaller value indicates a larger
coherent-processing loss relative to the legitimate receiver.

\section{Unified Asymptotic Performance}

\begin{proposition}[Unified asymptotic SNRL]
Suppose that $\{g_n\}$ is i.i.d. with $\mu_g=\E[g_n]$ and
$0<q_g=\E[|g_n|^2]<\infty$. Then
\begin{equation}
  \rho_N\xrightarrow{\mathrm{a.s.}}
  \rho_\infty=\frac{|\mu_g|^2}{q_g}
  =\frac{|\mu_g|^2}{|\mu_g|^2+\sigma_g^2},
  \qquad
  \sigma_g^2=\E[|g_n-\mu_g|^2],
  \label{eq:rhoinf}
\end{equation}
and $\E[\rho_N]\to\rho_\infty$.
\end{proposition}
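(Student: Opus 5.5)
The plan is to rewrite $\rho_N$ in terms of sample means and apply the strong law of large numbers (SLLN) to numerator and denominator separately. Define
\begin{equation*}
  \bar g_N=\frac1N\sum_{n=0}^{N-1}g_n,\qquad
  \bar q_N=\frac1N\sum_{n=0}^{N-1}|g_n|^2 .
\end{equation*}
On the event $\bar q_N>0$, \eqref{eq:rhoN} becomes $\rho_N=|\bar g_N|^2/\bar q_N$. The hypothesis $q_g<\infty$ gives $\E|g_n|\le q_g^{1/2}<\infty$, so both $g_n$ and $|g_n|^2$ are integrable i.i.d. sequences. Kolmogorov's SLLN therefore gives $\bar g_N\to\mu_g$ and $\bar q_N\to q_g$ almost surely. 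For the complex mean $\bar g_N$, I will apply the SLLN to the real and imaginary parts separately.

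Next I will handle the degenerate convention. Since $q_g>0$, on the almost-sure event where $\bar q_N\to q_g$ there is a finite random index $N_0$ with $\bar q_N>q_g/2>0$ for all $N\ge N_0$. From that index on, the convention $\rho_N=0$ plays no role, and $\rho_N=|\bar g_N|^2/\bar q_N$ holds. The map $(z,t)\mapsto|z|^2/t$ is continuous on $\mathbb C\times(0,\infty)$, so the continuous mapping argument, applied pathwise, gives $\rho_N\to|\mu_g|^2/q_g$ almost surely. The second expression for $\rho_\infty$ in \eqref{eq:rhoinf} follows by expanding $\E|g_n-\mu_g|^2=q_g-|\mu_g|^2$, so that $q_g=|\mu_g|^2+\sigma_g^2$.

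For the mean convergence, I will use that $0\le\rho_N\le1$ for every $N$, including on the degenerate event, by Cauchy--Schwarz and the convention. The sequence is thus uniformly bounded by the integrable constant $1$, and the dominated (or bounded) convergence theorem upgrades the almost-sure convergence to $\E[\rho_N]\to\rho_\infty$.

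The only delicate step is the denominator. I need to ensure that the event $\sum_n|g_n|^2=0$ and small values of $\bar q_N$ do not break the ratio argument. Taking the limit pathwise after the random index $N_0$, as above, handles this cleanly. Only first moments of $g_n$ and $|g_n|^2$ are used, so the assumption $q_g<\infty$ suffices.
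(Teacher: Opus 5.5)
Your proposal is correct and follows essentially the same route as the paper. Both arguments rewrite $\rho_N$ in sample-mean form, apply the SLLN to $g_n$ and $|g_n|^2$ (justified by $\E|g_n|\le q_g^{1/2}$), pass to the limit by continuous mapping once $\bar q_N$ is eventually positive almost surely, use $q_g=|\mu_g|^2+\sigma_g^2$, and conclude with bounded convergence from $0\le\rho_N\le1$; your random index $N_0$ handling of the $\rho_N=0$ convention is just a slightly more explicit version of the paper's remark.
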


\begin{proof}
Introduce the sample means
\begin{equation}
  \bar g_N=\frac{1}{N}\sum_{n=0}^{N-1}g_n,
  \qquad
  \bar q_N=\frac{1}{N}\sum_{n=0}^{N-1}|g_n|^2.
  \label{eq:samplemeans}
\end{equation}
Because $\E|g_n|\leq\sqrt{\E|g_n|^2}<\infty$, the strong law of large
numbers applies to both $g_n$ and $|g_n|^2$, and therefore
\begin{equation}
  \bar g_N\xrightarrow{\mathrm{a.s.}}\mu_g,
  \qquad
  \bar q_N\xrightarrow{\mathrm{a.s.}}q_g.
  \label{eq:slln}
\end{equation}
Since $q_g>0$, the denominator $\bar q_N$ is positive for all sufficiently
large $N$ on an event of probability one.  Rewriting \eqref{eq:rhoN} as
\begin{equation}
  \rho_N=\frac{|\bar g_N|^2}{\bar q_N}
  \label{eq:ratioform}
\end{equation}
and applying the continuous mapping theorem to \eqref{eq:slln} gives
\begin{equation}
  \rho_N\xrightarrow{\mathrm{a.s.}}\frac{|\mu_g|^2}{q_g}.
\end{equation}
Furthermore,
\begin{equation}
  q_g=\E|g_n|^2
  =|\E g_n|^2+\E|g_n-\E g_n|^2
  =|\mu_g|^2+\sigma_g^2,
\end{equation}
which proves the second equality in \eqref{eq:rhoinf}.

It remains to prove convergence of the expectation.  By the
Cauchy--Schwarz inequality,
\begin{equation}
  \left|\sum_{n=0}^{N-1}g_n\right|^2
  \leq N\sum_{n=0}^{N-1}|g_n|^2,
\end{equation}
so $0\leq\rho_N\leq1$.  The sequence $\{\rho_N\}$ is thus uniformly
bounded.  Almost-sure convergence together with the bounded convergence
theorem yields
\begin{equation}
  \lim_{N\to\infty}\E[\rho_N]
  =\E\!\left[\lim_{N\to\infty}\rho_N\right]
  =\rho_\infty.
\end{equation}
\end{proof}

\section{Finite-CPI Performance}

For the finite-$N$ expansion below, assume that $g_n$ has finite support
and is not identically zero.  These conditions ensure the required moment
bounds and make the probability of a vanishing or anomalously small sample
denominator exponentially small.

Define
\begin{equation}
  \begin{aligned}
    \mathbf x_n
    &=\begin{bmatrix}
      \Re\{g_n\} & \Im\{g_n\} & |g_n|^2
    \end{bmatrix}^{T},\\
    \mathbf m&=\E[\mathbf x_n],\qquad
    \mathbf\Sigma=\Cov(\mathbf x_n),
  \end{aligned}
  \label{eq:xdef}
\end{equation}
and
\begin{equation}
  f(\mathbf x)=\frac{x_1^2+x_2^2}{x_3},\qquad x_3>0.
  \label{eq:fdef}
\end{equation}

\begin{proposition}[Finite-CPI SNRL]
Under the preceding assumptions,
\begin{equation}
  \E[\rho_N]=\rho_\infty+\frac{B}{N}+o(N^{-1}),
  \qquad
  \Var(\rho_N)=\frac{V}{N}+o(N^{-1}),
  \label{eq:finiteexpansion}
\end{equation}
where
\begin{equation}
  B=\frac12\operatorname{tr}
  \!\left[\nabla^2 f(\mathbf m)\mathbf\Sigma\right],
  \qquad
  V=\nabla f(\mathbf m)^T\mathbf\Sigma\nabla f(\mathbf m).
  \label{eq:BV}
\end{equation}
\end{proposition}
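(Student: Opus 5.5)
We prove the statement by the delta method applied to the sample mean $\bar{\mathbf x}_N=N^{-1}\sum_{n=0}^{N-1}\mathbf x_n$, which by \eqref{eq:ratioform} satisfies $\rho_N=f(\bar{\mathbf x}_N)$ whenever $\bar q_N>0$. Since $g_n$ has finite support and is not identically zero, $q_g=m_3>0$. Each coordinate of $\mathbf x_n$ is also bounded, say $\|\mathbf x_n\|\le C$ almost surely. Fix $\delta=m_3/2$ and split the sample space into the good event $G_N=\{\|\bar{\mathbf x}_N-\mathbf m\|\le\delta\}$ and its complement. By Hoeffding's inequality applied coordinatewise, $\Pr(G_N^c)\le c_1e^{-c_2N}$. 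Because $0\le\rho_N\le1$ and $0\le f(\mathbf m)\le 1$ (the Cauchy--Schwarz bound shown in the proof of Proposition~1), the contribution of $G_N^c$ to $\E[\rho_N]$, $\E[\rho_N^2]$ and any centered moment is $O(e^{-c_2N})=o(N^{-2})$. This step also disposes of the convention $\rho_N=0$ on a vanishing denominator, since that event lies inside $G_N^c$.

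On $G_N$, the point $\bar{\mathbf x}_N$ lies in the compact convex set $K=\{\mathbf x:\|\mathbf x-\mathbf m\|\le\delta\}$, on which $x_3\ge m_3/2$. Hence $f$ is $C^\infty$ on $K$ with uniformly bounded derivatives up to order three. Write $\mathbf d=\bar{\mathbf x}_N-\mathbf m$. Taylor's theorem with remainder gives
\begin{equation*}
 f(\bar{\mathbf x}_N)=f(\mathbf m)+\nabla f(\mathbf m)^T\mathbf d+\tfrac12\mathbf d^T\nabla^2 f(\mathbf m)\mathbf d+R_N,\qquad |R_N|\le C_3\|\mathbf d\|^3 .
\end{equation*}
The moments of $\mathbf d$ are standard for i.i.d.\ bounded vectors: $\E[\mathbf d]=\mathbf 0$, $\E[\mathbf d\mathbf d^T]=\mathbf\Sigma/N$, and $\E\|\mathbf d\|^{k}=O(N^{-k/2})$ by the Marcinkiewicz--Zygmund (or Rosenthal) inequality. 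Restricting these expectations to $G_N$ changes them only by exponentially small amounts, because all quantities are bounded. Taking expectations then gives
\begin{equation*}
 \E[\rho_N]=f(\mathbf m)+0+\tfrac12\operatorname{tr}[\nabla^2 f(\mathbf m)\mathbf\Sigma]/N+O(N^{-3/2}),
\end{equation*}
and $f(\mathbf m)=(\Re\mu_g)^2+(\Im\mu_g)^2)/q_g=\rho_\infty$. This is the bias claim with $B$ as in \eqref{eq:BV}.

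For the variance, the same expansion gives $\rho_N-f(\mathbf m)=L_N+S_N$ on $G_N$, where $L_N=\nabla f(\mathbf m)^T\mathbf d$ and $|S_N|\le C'\|\mathbf d\|^2$. Then $\E[(\rho_N-f(\mathbf m))^2]=\E L_N^2+2\E[L_NS_N]+\E S_N^2$. The first term is $V/N$. By Cauchy--Schwarz, the cross term is bounded by $O(N^{-1/2})\,O(N^{-1})=O(N^{-3/2})$, and the last term is $O(N^{-2})$. Subtracting $(\E\rho_N-f(\mathbf m))^2=O(N^{-2})$ yields $\Var(\rho_N)=V/N+O(N^{-3/2})$.

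The main obstacle is controlling the behavior near a small denominator $\bar q_N$, where $f$ is unbounded and the Taylor expansion is invalid. The plan handles this entirely through the boundedness $0\le\rho_N\le1$ together with the exponential concentration of $\bar{\mathbf x}_N$, so no moment of $1/\bar q_N$ is ever needed. The one point to check is that $f(\mathbf m)$, and hence the polynomial Taylor terms, remain uniformly bounded. These terms are polynomials in the bounded vector $\mathbf d$, so they are indeed bounded on all of the sample space, and their contribution from $G_N^c$ is also exponentially small.
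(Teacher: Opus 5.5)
Your proposal is correct and follows essentially the same route as the paper. Both use a second-order Taylor (delta-method) expansion of $f$ about $\mathbf m$, Hoeffding's inequality with $0\le\rho_N\le1$ to make the small-denominator event contribute only $O(e^{-c_2N})$, the bounds $\E\|\mathbf h_N\|^3=O(N^{-3/2})$ and $\E\|\mathbf h_N\|^4=O(N^{-2})$, and a linear-plus-remainder split with Cauchy--Schwarz for the variance, the differences being cosmetic (a ball-shaped good event instead of the one-sided event $\{\bar x_{N,3}\ge q_g/2\}$, and the explicit rate $O(N^{-3/2})$ where the paper states $o(N^{-1})$).
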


\begin{proof}
Write
\begin{equation}
  \bar{\mathbf x}_N=\frac1N\sum_{n=0}^{N-1}\mathbf x_n,
  \qquad
  \mathbf h_N=\bar{\mathbf x}_N-\mathbf m.
  \label{eq:hdef}
\end{equation}
Whenever the third component of $\bar{\mathbf x}_N$ is positive,
\begin{equation}
  \rho_N=f(\bar{\mathbf x}_N)=f(\mathbf m+\mathbf h_N).
  \label{eq:rhof}
\end{equation}
The first two moments of $\mathbf h_N$ are
\begin{equation}
  \E[\mathbf h_N]=\mathbf0,
  \qquad
  \E[\mathbf h_N\mathbf h_N^T]=\frac{\mathbf\Sigma}{N}.
  \label{eq:hmoments}
\end{equation}

Let $m_3=q_g>0$.  Since $|g_n|^2$ is bounded, Hoeffding's inequality gives
constants $c_1,c_2>0$ such that
\begin{equation}
  \Pr\!\left(\bar x_{N,3}<\frac{q_g}{2}\right)
  \leq c_1e^{-c_2N}.
  \label{eq:lower-tail}
\end{equation}
Define $R_N$ for all realizations through the identity
\begin{equation}
  \rho_N=f(\mathbf m)
  +\nabla f(\mathbf m)^T\mathbf h_N
  +\frac12\mathbf h_N^T\nabla^2f(\mathbf m)\mathbf h_N
  +R_N,
  \label{eq:taylor}
\end{equation}
On the complementary event in \eqref{eq:lower-tail}, $f$ has bounded
derivatives through third order on the line segment joining $\mathbf m$ and
$\bar{\mathbf x}_N$. The second-order multivariate Taylor formula therefore
gives $|R_N|\leq C\|\mathbf h_N\|^3$ there, for a constant $C$ independent
of $N$. On the exceptional event, boundedness of $\mathbf x_n$ and
$0\leq\rho_N\leq1$ imply that $R_N$ is uniformly bounded. Standard moment
bounds for sums of centered independent random vectors give
\begin{equation}
  \E\|\mathbf h_N\|^3=O(N^{-3/2}),
  \qquad
  \E\|\mathbf h_N\|^4=O(N^{-2}).
  \label{eq:remainder-moments}
\end{equation}
The contribution of the exceptional event in \eqref{eq:lower-tail} is
therefore $O(e^{-c_2N})$. Consequently,
\begin{equation}
  \E|R_N|=o(N^{-1}).
  \label{eq:remaindermean}
\end{equation}

Taking expectations in \eqref{eq:taylor}, using \eqref{eq:hmoments}, and
applying
\begin{equation}
  \E[\mathbf h_N^T\mathbf H\mathbf h_N]
  =\operatorname{tr}\!\left(
    \mathbf H\E[\mathbf h_N\mathbf h_N^T]
  \right)
\end{equation}
with $\mathbf H=\nabla^2f(\mathbf m)$ gives
\begin{align}
  \E[\rho_N]
  &=f(\mathbf m)
    +\frac{1}{2N}\operatorname{tr}
      [\nabla^2f(\mathbf m)\mathbf\Sigma]
    +o(N^{-1}) \\
  &=\rho_\infty+\frac{B}{N}+o(N^{-1}).
  \label{eq:meanresult}
\end{align}

For the variance, set
\begin{equation}
  L_N=\nabla f(\mathbf m)^T\mathbf h_N,
  \qquad
  Q_N=\rho_N-f(\mathbf m)-L_N.
\end{equation}
Equations \eqref{eq:taylor}, \eqref{eq:lower-tail}, and
\eqref{eq:remainder-moments} imply
\begin{equation}
  \E[Q_N^2]=O(N^{-2})=o(N^{-1}).
  \label{eq:qbound}
\end{equation}
Moreover, $\E[L_N]=0$ and
\begin{equation}
  \Var(L_N)
  =\nabla f(\mathbf m)^T
    \E[\mathbf h_N\mathbf h_N^T]
    \nabla f(\mathbf m)
  =\frac{V}{N}.
  \label{eq:linearvariance}
\end{equation}
By Cauchy--Schwarz,
\begin{equation}
  |\Cov(L_N,Q_N)|
  \leq\sqrt{\Var(L_N)\Var(Q_N)}=o(N^{-1}).
\end{equation}
It follows that
\begin{align}
  \Var(\rho_N)
  &=\Var(L_N)+2\Cov(L_N,Q_N)+\Var(Q_N)\\
  &=\frac{V}{N}+o(N^{-1}),
\end{align}
which completes the proof.
\end{proof}

For completeness, if $\mathbf m=(a,b,q_g)^T$, direct differentiation of
\eqref{eq:fdef} gives
\begin{equation}
  \nabla f(\mathbf m)=
  \begin{bmatrix}
    2a/q_g\\[2pt]
    2b/q_g\\[2pt]
    -(a^2+b^2)/q_g^2
  \end{bmatrix}
  \label{eq:gradient}
\end{equation}
and
\begin{equation}
  \nabla^2f(\mathbf m)=
  \begin{bmatrix}
    2/q_g & 0 & -2a/q_g^2\\
    0 & 2/q_g & -2b/q_g^2\\
    -2a/q_g^2 & -2b/q_g^2 & 2(a^2+b^2)/q_g^3
  \end{bmatrix}.
  \label{eq:hessian}
\end{equation}
Substitution of \eqref{eq:gradient}--\eqref{eq:hessian} into
\eqref{eq:BV} gives the constants $B$ and $V$ appearing in
\eqref{eq:finiteexpansion}.

\section{Bernoulli Element Activation}

Let $0<p\leq1$, and let $b_1,\ldots,b_{N_T}$ be independent
$\operatorname{Bernoulli}(p)$ variables. The transmit weights are
$\mathbf w=\mathbf b\odot\mathbf a_T^*(\theta_T)$, where
$\mathbf b=[b_1,\ldots,b_{N_T}]^T$, so every active element is phase aligned
toward the target direction. Define
\begin{equation}
  z_m=e^{-j(m-1)\psi},\qquad
  K=\sum_{i=1}^{N_T}b_i,\qquad
  A(\psi)=\sum_{m=1}^{N_T}z_m.
\end{equation}
The pulse index is omitted because the coefficients are identically
distributed over pulses.  The Bernoulli spatial coefficient is
\begin{equation}
  g=K\sum_{m=1}^{N_T}b_mz_m.
  \label{eq:bernoulli-g}
\end{equation}
Define
\begin{align}
  D_p&=p[1+3(N_T-1)p+(N_T-1)(N_T-2)p^2],\label{eq:Dp}\\
  O_p&=p^2[4+5(N_T-2)p+(N_T-2)(N_T-3)p^2].\label{eq:Op}
\end{align}

\begin{proposition}[Bernoulli SNRL]
The first two moments of \eqref{eq:bernoulli-g} are
\begin{align}
  \mu_g^{(B)}&=[p+(N_T-1)p^2]A(\psi),\label{eq:bernoulli-mean}\\
  q_g^{(B)}&=N_T D_p+(|A(\psi)|^2-N_T)O_p.\label{eq:bernoulli-second}
\end{align}
Whenever $q_g^{(B)}>0$, the corresponding asymptotic SNRL is
\begin{equation}
  \rho_\infty^{(B)}=
  \frac{[p+(N_T-1)p^2]^2|A(\psi)|^2}
       {N_T D_p+(|A(\psi)|^2-N_T)O_p}.
  \label{eq:bernoulli-rho}
\end{equation}
\end{proposition}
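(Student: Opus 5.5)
The proof is a direct moment computation that exploits the exchangeability of the Bernoulli variables. Proposition~1 then supplies the asymptotic SNRL. The key device is to split $K$ according to which indices are ``pinned'' by the factors $b_m$ appearing in $g$. I will use throughout that $b_i^2=b_i$ and that, for a fixed index set $S$ with $|S|=s$, the sum $K_S=\sum_{i\notin S}b_i$ is $\operatorname{Binomial}(N_T-s,p)$ and independent of $\{b_i\}_{i\in S}$.

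\emph{Mean.} By linearity, $\mu_g^{(B)}=\sum_{m}z_m\E[Kb_m]$. On $\{b_m=1\}$ we have $K=1+K_{\{m\}}$, so $\E[Kb_m]=p\,\E[1+K_{\{m\}}]=p[1+(N_T-1)p]$. This value does not depend on $m$, so it factors out of the sum and leaves $A(\psi)$, which gives \eqref{eq:bernoulli-mean}.

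\emph{Second moment.} I expand
\begin{equation*}
|g|^2=K^2\sum_{m,l}b_mb_lz_mz_l^*
\end{equation*}
and split the double sum into diagonal and off-diagonal terms. Since $|z_m|=1$, each diagonal term contributes $\E[K^2b_m]=p\,\E[(1+K_{\{m\}})^2]$. Using $\E[X^2]=np(1-p)+n^2p^2$ for $X\sim\operatorname{Binomial}(n,p)$ with $n=N_T-1$, this becomes
\begin{equation*}
p\,[1+2np+np(1-p)+n^2p^2]=p\,[1+3np+n(n-1)p^2]=D_p .
\end{equation*}
For $m\neq l$, the analogous computation with $S=\{m,l\}$ and $n=N_T-2$ gives
\begin{equation*}
\E[K^2b_mb_l]=p^2\,\E[(2+K_{\{m,l\}})^2]=p^2[4+5np+n(n-1)p^2]=O_p .
\end{equation*}
Because both values are independent of the indices, the second moment reduces to
\begin{equation*}
N_TD_p+O_p\sum_{m\neq l}z_mz_l^* .
\end{equation*}
Finally, $\sum_{m\neq l}z_mz_l^*=|A(\psi)|^2-\sum_m|z_m|^2=|A(\psi)|^2-N_T$, which gives \eqref{eq:bernoulli-second}.

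\emph{Asymptotic SNRL.} The coefficients $g_n$ are i.i.d. across pulses by construction. They are bounded, since $|g|\le N_T^2$, so $q_g^{(B)}<\infty$. Under the hypothesis $q_g^{(B)}>0$, Proposition~1 applies and gives $\rho_\infty^{(B)}=|\mu_g^{(B)}|^2/q_g^{(B)}$. Substituting the two moments yields \eqref{eq:bernoulli-rho}.

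The only step needing real care is the off-diagonal term. There one must correctly account for the two pinned indices contributing the constant $2$ inside $K$, and for the binomial second moment. The rest is bookkeeping; I would verify the polynomial identities for $D_p$ and $O_p$ by expanding in powers of $p$.
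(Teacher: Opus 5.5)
Your proof is correct and follows the paper's route. You use the same diagonal/off-diagonal split of $|g|^2$, the same binomial decompositions $K=b_m+T_m$ and $K=b_m+b_l+T_{ml}$ giving $D_p$ and $O_p$, and the same appeal to Proposition~1; the only minor differences are that you get the mean by conditioning on $b_m=1$ rather than summing $\E[b_ib_m]$, and you explicitly check Proposition~1's moment hypothesis via $|g|\le K^2\le N_T^2$, which the paper leaves implicit.
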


\begin{proof}
Expanding \eqref{eq:bernoulli-g} gives
\begin{equation}
  g=\sum_{i=1}^{N_T}\sum_{m=1}^{N_T}b_ib_mz_m.
  \label{eq:g-expanded}
\end{equation}
Since $b_i^2=b_i$, independence yields
\begin{equation}
  \E[b_ib_m]=
  \begin{cases}
    p, & i=m,\\
    p^2, & i\ne m.
  \end{cases}
\end{equation}
Therefore,
\begin{align}
  \E[g]
  &=\sum_{m=1}^{N_T}z_m
    \left(\E[b_m]+\sum_{i\ne m}\E[b_i b_m]\right)\\
  &=[p+(N_T-1)p^2]\sum_{m=1}^{N_T}z_m,
\end{align}
which proves \eqref{eq:bernoulli-mean}.

For the second moment,
\begin{align}
  |g|^2
  &=K^2\left|\sum_{m=1}^{N_T}b_mz_m\right|^2\\
  &=\sum_{m=1}^{N_T}K^2b_m
    +\sum_{\substack{m,l=1\\m\ne l}}^{N_T}
      K^2b_mb_lz_mz_l^*.
  \label{eq:q-expanded}
\end{align}
The expectations multiplying all diagonal terms are identical.  For a
fixed $m$, write $K=b_m+T_m$, where
$T_m\sim\operatorname{Binomial}(N_T-1,p)$ is independent of $b_m$. Hence
\begin{align}
  \E[K^2b_m]
  &=p\E[(1+T_m)^2]\\
  &=p\{1+2(N_T-1)p+(N_T-1)p(1-p)\notag\\
  &\hspace{3.2em}{}+(N_T-1)^2p^2\}\\
  &=D_p.
  \label{eq:diagonal-expectation}
\end{align}

Similarly, for fixed $m\ne l$, write
$K=b_m+b_l+T_{ml}$, where
$T_{ml}\sim\operatorname{Binomial}(N_T-2,p)$. Conditioning on
$b_m=b_l=1$ gives
\begin{align}
  \E[K^2b_mb_l]
  &=p^2\E[(2+T_{ml})^2]\\
  &=p^2\{4+4(N_T-2)p+(N_T-2)p(1-p)\notag\\
  &\hspace{3.2em}{}+(N_T-2)^2p^2\}\\
  &=O_p.
  \label{eq:offdiagonal-expectation}
\end{align}
Taking expectations in \eqref{eq:q-expanded} and using
\begin{equation}
  \sum_{\substack{m,l=1\\m\ne l}}^{N_T}z_mz_l^*
  =\left|\sum_{m=1}^{N_T}z_m\right|^2-N_T
  =|A(\psi)|^2-N_T
\end{equation}
yields
\begin{equation}
  \E|g|^2=N_T D_p+(|A(\psi)|^2-N_T)O_p,
\end{equation}
which proves \eqref{eq:bernoulli-second}.  Finally, Proposition 1 gives
\begin{equation}
  \rho_\infty^{(B)}
  =\frac{|\mu_g^{(B)}|^2}{q_g^{(B)}},
\end{equation}
and substitution of \eqref{eq:bernoulli-mean} and
\eqref{eq:bernoulli-second} establishes \eqref{eq:bernoulli-rho}.
\end{proof}

\IEEEtriggeratref{4}

\end{document}